\documentclass[sigconf,nonacm]{style/acmart}

\usepackage[]{svg}
\usepackage{multirow}
\usepackage{caption}
\usepackage{subcaption}
\usepackage{algorithm}
\usepackage{algpseudocode}
\usepackage{afterpage}
\usepackage{tablefootnote}
\usepackage{appendix}
\usepackage{float}
\usepackage{gensymb}

\usepackage{amsthm}
\newtheorem{definition}{Definition}

\usepackage{xcolor}

\definecolor{codegreen}{rgb}{0,0.6,0}
\definecolor{codegray}{rgb}{0.5,0.5,0.5}
\definecolor{codepurple}{rgb}{0.58,0,0.82}
\definecolor{backcolour}{rgb}{0.95,0.95,0.92}
\usepackage{listings}
\lstdefinestyle{mystyle}{
    backgroundcolor=\color{backcolour},   
    commentstyle=\color{codegreen},
    keywordstyle=\color{magenta},
    numberstyle=\tiny\color{codegray},
    stringstyle=\color{codepurple},
    basicstyle=\ttfamily\footnotesize,
    breakatwhitespace=false,         
    breaklines=true,                 
    captionpos=b,                    
    keepspaces=true,                 
    numbers=left,                    
    numbersep=5pt,                  
    showspaces=false,                
    showstringspaces=false,
    showtabs=false,                  
    tabsize=2
}
\newbool{showcomments}

   \booltrue{showcomments}

    \newcommand{\pinaforecomment}[4]{%
    \ifbool{showcomments}{%
        \colorbox{#1}{\textcolor{#4}{\parbox{.8\linewidth}{#2: #3}}}%
    }{}%
}

\lstdefinelanguage{json}{
    basicstyle=\ttfamily\small,
    showstringspaces=false,
    breaklines=true,
    literate=
     *{:}{{{\color{red}:}}}{1}
      {,}{{{\color{blue},}}}{1}
      {?}{{{\color{red}? }}}{1}
      {\{}{{{\color{blue}\{}}}{1}
      {\}}{{{\color{blue}\}}}}{1}
      {[}{{{\color{blue}[}}}{1}
      {]}{{{\color{blue}]}}}{1}
}
\usepackage[most]{tcolorbox}
\tcbuselibrary{skins,breakable,listings}
\usepackage{siunitx}
\tcbuselibrary{listings,breakable,skins}
\newtcblisting{queryblock}[2][]{%
  breakable,
  enhanced,
  colback=black!2,
  colframe=black!35,
  arc=2pt,
  boxrule=0.5pt,
  left=6pt,right=6pt,top=4pt,bottom=4pt,
  title={#2},
  fonttitle=\bfseries,
  listing only,
  listing options={
    language=json,
    basicstyle=\ttfamily\footnotesize,
    columns=fullflexible,
    keepspaces=true,
    breaklines=true,
    breakatwhitespace=true,
    showstringspaces=false,
  },
  #1
}

\begin{document}


\graphicspath{ {figures/}{auto_commit_fig/}{auto_fig/} }

\newcommand{\latexfile}[1]{\input{sections/#1}}

\newcommand{\avikomment}[1]{\pinaforecomment{green}{Avik}{#1}{black}}
\newcommand{\nrscomment}[1]{\pinaforecomment{violet}{Nicole}{#1}{white}}
\newcommand{\lukascomment}[1]{\pinaforecomment{blue}{Lukas}{#1}{white}}


\title{3D Spatial Pattern Matching}

\author{Nicole R. Schneider}
\email{nsch@umd.edu}
\affiliation{%
  \institution{University of Maryland}
  \city{College Park}
  \country{USA}
}

\author{Avik Das}
\email{adas1236@terpmail.umd.edu}
\affiliation{%
  \institution{University of Maryland}
  \city{College Park}
  \country{USA}
}

\author{Lukas Arzoumanidis}
\email{lukas.arzumanidis@hcu-hamburg.de}
\orcid{https://orcid.org/0000-0001-6668-1695}
\affiliation{%
  \institution{HafenCity University}
  \city{Hamburg}
  \country{Germany}
}

\author{Abhijeet Ghodgaonkar}
\email{aghodgao@terpmail.umd.edu}
\affiliation{%
  \institution{University of Maryland}
  \city{College Park}
  \country{USA}
}

\author{Hanan Samet}
\email{hjs@cs.umd.edu}
\affiliation{%
  \institution{University of Maryland}
  \city{College Park}
  \country{USA}
}

\author{Youness Dehbi}
\email{youness.dehbi@hcu-hamburg.de}
\orcid{https://orcid.org/0000-0003-0133-4099}
\affiliation{%
  \institution{HafenCity University}
  \city{Hamburg}
  \country{Germany}
}

\begin{abstract}
Spatial pattern matching is the process of matching query entities and constraints with database entities and relations.
It has many applications, including similar region search, housing market search, landmark search, and road network matching.
To our knowledge, all existing spatial pattern matching approaches frame the problem in a 2 dimensional space, where entities lie in a cartesian plane and relationships defined between them are contained in 2 dimensions.
However, this problem framing has significant limitations when searching for real world entities that have height in addition to position.
To address this limitation, we extend spatial pattern matching to 3 dimensions and provide a generalized definition of the problem.
We describe a subgraph matching algorithm capable of resolving 3D spatial patterns over distance relations and release two 3D spatial pattern matching datasets, one synthetic and one containing real 3D building data from the city of Hamburg, Germany.
We test our subgraph matching algorithm on both datasets and present results as a baseline for future methods to build upon.

\end{abstract}


\maketitle

\section{Introduction}


Spatial pattern matching is the process of matching query entities and constraints with database entities and relations.
It has many applications, including similar region search, housing market search, landmark search, and road network matching~\cite{Schneider2024}.
To our knowledge, all existing spatial pattern matching approaches frame the problem in a 2 dimensional space, where entities lie in a cartesian plane and relationships defined between them are similarly contained in 2 dimensions.
Semantically enriched building features such as doors and windows provide useful spatial context that cannot be captured using 2D spatial pattern matching.

\begin{figure}[htbp]
  \centering
  \includegraphics[width=0.9\linewidth]{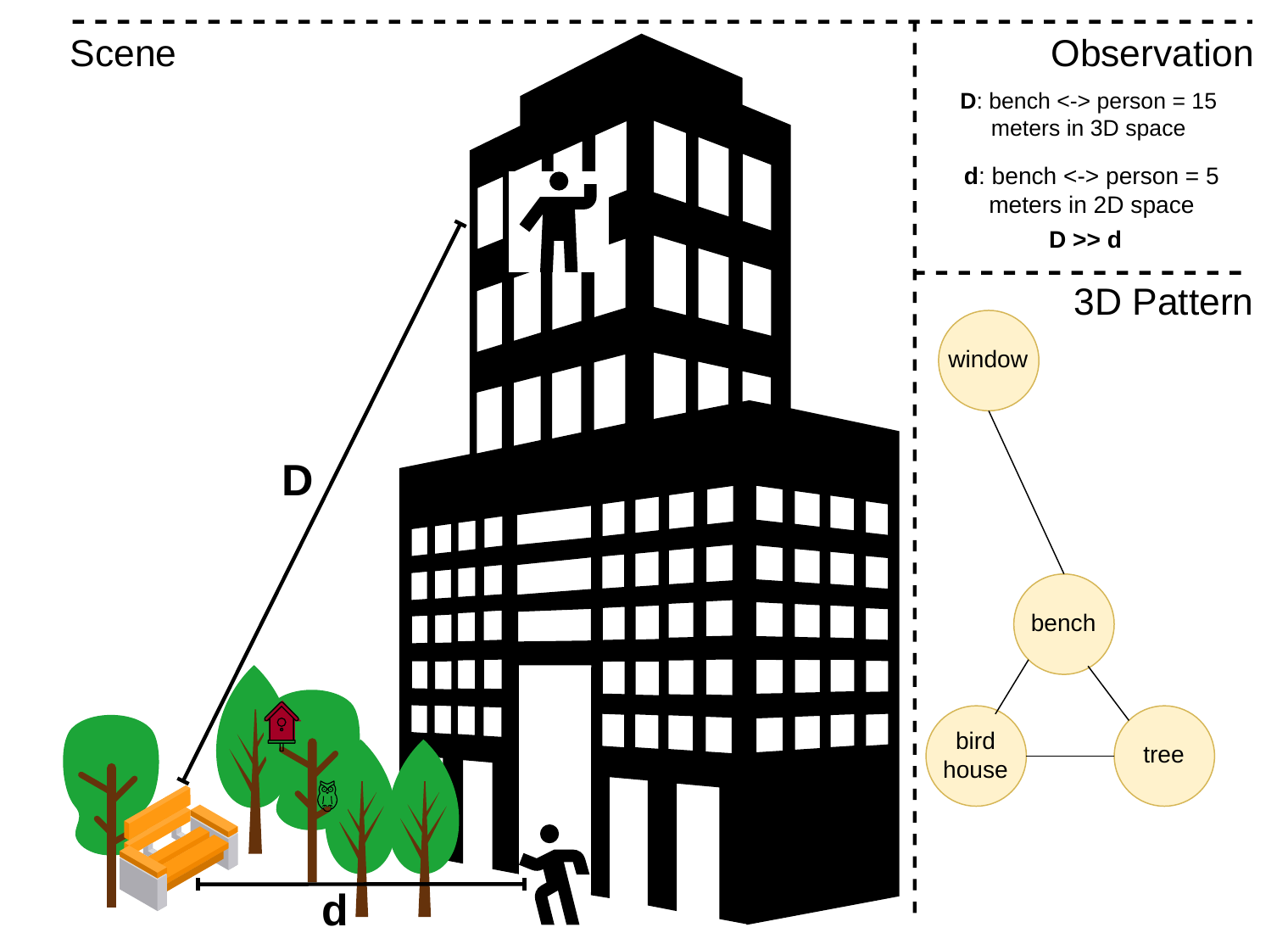}
  \caption{A person looking out window or from ground level may appear co-located in 2D, yet have substantially different distances to other objects in a 3D scene. Since vertical distance influences spatial relationships, we generalize spatial pattern matching into 3 dimensions to support richer search.}
\Description{3D spatial scene}
  \label{fig:pinn}
\end{figure}

However, the 2-dimensional framing of the problem has significant limitations when searching for real world entities that have height in addition to position, because they exist in 3-dimensional space.
To illustrate the limitation of 2 dimensional spatial pattern matching, consider the landmark search use case for the problem~\cite{Osul2023,Osul2023b}, where a user may recall spatial features at different heights, especially in a dense urban area with tall buildings.
For example, in Figure \ref{fig:pinn}, a bird-watcher looking out a window might see about 15 meters away a bench, tree, and birdhouse which hangs from the tree.
With 2D spatial pattern matching, the query for such a scene would be limited to 2 dimensional distances between the building (ground level) to the objects in the scene, incorrectly capturing the user's distance estimate, and failing to capture the role of height in the spatial scene.

\begin{queryblock}{2D Query Constraints}
(bench, building, ?)
(birdhouse, building, ?)
(tree, building, ?)
(bench, birdhouse, 1 to 3 meters)
(birdhouse, tree, 1 to 3 meters)
(bench, tree, 1 to 3 meters)
\end{queryblock}

Here, the first relations involving the user's viewpoint from the building's window, cannot be defined in 2 dimensional space, so they get compressed to ground level.
However, the user does not know the approximate distance between the bench and the building at ground level, since their perspective is from a tall window.
Even if the user could estimate the distance in the 2D projection of the scene, the scene itself forms a non-planar graph that is not satisfiable using 2D dimensional spatial pattern search.
The 3D spatial pattern version of the same query matches the scene accurately, capturing the height of the building implicitly as part of the window distance constraints in the query.

\begin{queryblock}{3D Query Constraints}
(bench, window, 14 to 16 meters)
(birdhouse, window, 14 to 16 meters)
(tree, window, 14 to 16 meters)
(bench, birdhouse, 1 to 3 meters)
(birdhouse, tree, 1 to 3 meters)
(bench, tree, 1 to 3 meters)
\end{queryblock}


To address the present 2D limitation of spatial pattern matching, we extend the problem to 3 dimensions and provide a subgraph matching algorithm~\footnote{\href{https://github.com/adas1236/SpatialPatternMatching3d}{https://github.com/adas1236/SpatialPatternMatching3d}} capable of resolving 3D spatial patterns over distance relations.
To aid in further research, we also release two 3D spatial pattern matching datasets, one synthetic and one containing real 3D building data from the city of Hamburg in Germany.
We test our subgraph matching algorithm on both datasets and present results as a baseline for future methods to build upon.
\section{Problem Definition}

Metric spatial pattern matching is a search over metric relations that characterize how far apart entities are in space~\cite{Schneider2024b}. 
We define metric relations between entities as ranges of distances between entities, each defined by a distance interval and a sign \{exclusion in, exclusion out, mutual exclusion and multiple inclusion\}~\cite{Fang2019}.

\begin{definition}[Spatial Query Pattern]
A spatial query pattern is a graph
$P = P(V, E)$ 
of spatial objects $V$ and edges $E$.
For each $v_i \in V$, there is an associated keyword $w_i$ and
for each edge $e_{i, j} = (v_i, v_j) \in E$, there is an interval $[\ell_{i,j}, u_{i, j}]$ and a sign $\sigma \in \{\leftarrow, \rightarrow, \leftrightarrow, -\}$. 

Let $o_s$ and $o_t$ be any two objects matched with $(v_i, v_j)$, i.e., they have keywords $w_i$ and
$w_j$ respectively and satisfy the spatial constraints. Then, following the convention in \citeauthor{Fang2019}~\cite{Fang2019}, the signs mean:
\begin{itemize}
    \item $v_i \rightarrow v_j$ $[v_i$ excludes $v_j]$: No object with keyword $w_j$ in $D$ should have a distance less than $\ell_{i,j}$ from $o_s$.
    \item $v_i \leftarrow v_j$ $[v_j$ excludes $v_i]$: No object with keyword $w_i$ in $D$ should have a distance less than $\ell_{i,j}$ from $o_t$.
    \item $v_i \leftrightarrow v_j$ $[$mutual exclusion$]$: No object with keyword $w_j$ in $D$ should have a distance less than $\ell_{i,j}$ from $o_s$, and the distance of any object with keyword $w_i$ in $D$ should be at least $\ell_{i,j}$ from $o_t$.
    \item $v_i - v_j$ $[$mutual inclusion$]$: The occurrence of any object pairs (other than $o_s$ and $o_t$) with keywords $w_i$ and $w_j$ in $D$ with distance shorter than $\ell_{i,j}$ is allowed.
\end{itemize}

\end{definition}

\begin{definition}[Spatial Database]
For our purposes, a spatial database is a weighted graph $D = D(V, E)$
of spatial objects $V$ and edges $E$ such that for each object $v_i \in V$, there is an associated set of keywords $W_i$.
\end{definition}

\begin{definition}[$k$-Dimensional Metric Spatial Database]
To say a spatial database $D$ is $k$-dimensional is to assert there is an isometric embedding $D \hookrightarrow \mathbb{R}^k$.
\end{definition}

\section{Related Work}

\subsection{2D Metric Spatial Pattern Matching}
\textit{Spacekey}~\cite{Fang2018,Fang2018b,Fang2019} proposes \textit{Multi-Pair Join} and \textit{Multi-Star Join} which use subgraph matching for spatial pattern search over metric relations.
Their Multi-Pair-Join algorithm has a worst-case complexity $\mathcal{O}(|P|\zeta |D|^2+\xi)$ where $\zeta$ is a sampling threshold in range $[0,1]$ and $\xi$ is the maximal number of partial matches. 
Their Multi-Star-Join algorithm has a worst-case complexity of $\mathcal{O}(|D|^4+|P||D|^2+\xi)$ time, but in practice is faster than the Multi-Pair-Join because it uses additional pruning criteria to eliminate partial matches during the join process~\cite{Fang2019}.


\textit{ESPM: Efficient Spatial Pattern Matching}~\cite{Chen2019} also performs spatial pattern matching over metric relations, but extends the work of \citeauthor{Fang2018}~\cite{Fang2019} by adding a step that uses a set of Inverted Linear Quadtrees, one per entity keyword, to prune unpromising edges before running Multi-Star-Join. 
By eliminating unpromising nodes early and carefully constructing the join order, ESPM scales better than Multi-Star-Join, despite having a similar theoretical worst-case complexity that is still exponential in the number of query entities.


Other metric spatial pattern matching approaches use example pattern queries that require the user identify existing patterns from the database to use for search~\cite{Chen2022,Luo2017,Zhang2022}.
This form of the problem is well suited for similar region search, but is ill-suited to other use cases, like location recall and landmark search, where the user vaguely recalls a place and hopes to find it again. These methods are also 2-dimensional in nature, since the examples are selected from a 2D map.
Mapping the example-based spatial pattern matching paradigm into 3 dimensions using our problem definition is an interesting area of future work.

\subsection{3D Spatial Data Structures}
Many spatial data structures have been designed for 3D data in metric space, as described by \citeauthor{Samet2006}~\cite{Samet2006}.
Examples include the Octree~\cite{Meagher1982}, the Vantage-Point Tree~\cite{Yianilos1993}, and the M-Tree~\cite{Ciaccia1997} which are designed to support proximity queries for nearest neighbor search in generalized metric spaces.
We leverage the Octree in our implementation of the 3D spatial pattern matching algorithm.

\subsection{3D City Modeling}
CityGML is an Open Geospatial Consortium (OGC) standard~\footnote{\url{https://www.ogc.org/standards/citygml/}} for the representation, storage, and exchange of semantic 3D city models and constitutes one of the fundamental data models for urban digital twins. 
In contrast to conventional 3D city models, which primarily focus on geometric or graphical representations, CityGML integrates semantic, topological, geometric, and appearance-related properties of urban objects within a unified data model.

Enrichment and reconstruction of semantic 3D city models from sparse and dense observations is an active research area. 
Examples include the estimation of building storeys from single street-view images~\citep{hcu_1248} and street-level facade openings from dense point clouds~\cite{Tang_2025_CVPR}. 
Other works, such as 
Nguyen et al.~\cite{nguyen_phdthesis_2024} detect spatio-semantic changes between large CityGML models by representing each model as a graph and comparing the properties and geometry of corresponding nodes. 
This addresses matching in 3D city models represented as semantic graphs rather than as purely geometric objects and explicitly incorporates spatial proximity through Euclidean distance when matching points.
\section{Method}

\subsection{Insight}
In the simplest case, using a naive brute force subgraph matching algorithm, the $k$-D spatial pattern matching problem can be solved by trivially extending the distance formula 
$d = \sqrt{(x_2 - x_1)^2 + (y_2 - y_1)^2}$ to $k$ dimensions:
\begin{equation} \label{3d-distance}
    d = \Big( \textstyle\sum_{i=1}^{k} (x^{(i)}_2 - x^{(i)}_1)^2 \Big)^{1/2}
\end{equation}

However, this is impractical, since the problem is NP-hard~\cite{Fang2019}, and all recent approaches in the 2D case use spatial indexes and careful join order to efficiently solve the problem.
We introduce 2 key insights that allow us to adapt the existing 2D metric spatial pattern matching algorithms~\cite{Chen2019} to work in arbitrarily many dimensions.

\begin{enumerate}
    \item Extending the minimum bounding rectangle to a minimum bounding rectangular prism.
    \item Replacing the Inverted Linear Quadtree with an Inverted Hyperoctree.
\end{enumerate}

With these adjustments, spatial indexing can be used to efficiently prune candidates and determine matches in $k$-dimensions.

\begin{algorithm}
\caption{$k$-dimensional SPM}
\label{alg:3Despm}
\begin{algorithmic}[1]
\Require $IHO$, $P$ \Comment{Inverted Hyperoctree $IHO$, Pattern $P$}
\For{$l \gets 1$ \textbf{to} $L$}
    \State Derive the order $O_l$ of computing $n$-matches;
    \For{each edge $e$ in $O_l$}
        \State Prune unpromising nodes;
        \State Compute the $n$-matches for $e$;
    \EndFor
\EndFor
\State Derive the order $O$ of computing $e$-matches;
\State Identify skip-edges by using $O$;
\For{each edge $e$ in $O$}
    \If{$e$ is not a skip edge}
        \State Prune unpromising objects;
        \State Compute the $e$-matches for $e$;
    \EndIf
\EndFor
\State Derive the order $\Gamma$ of joining $e$-matches;
\State $\Psi \gets$ join $e$-matches following the order $\Gamma$;
\State \Return $\Psi$ \Comment{all the matches of $P$}
\end{algorithmic}
\end{algorithm}

\subsection{$k$-D Spatial Pattern Matching Algorithm}

Following ESPM~\cite{Chen2019}, we first compute n-matches for edges level by level, then compute e-matches for edges, and finally join e-matches to compute all the matches of the spatial pattern.
Our $k$-dimensional Spatial Pattern Matching method (Algorithm~\ref{alg:3Despm}) takes as input a spatial pattern P and the Inverted Hyperoctree $IHO = \{IHO_i\}$, where $IHO_i$ is the Octree of keyword $w_i$ for all objects in a set $D$.

\begin{definition}[$n$-match]
Let $n_i$ and $n_j$ be two nodes in $IHO_i$ and $IHO_j$ respectively,
and $b_i$ and $b_j$ be their Minimum Bounding Rectangular Prisms (MBRPs).
The node pair $(n_i, n_j)$ is an \emph{$n$-match} for the edge
$(v_i, v_j)$ if
\begin{itemize}
  \item \textbf{Case} $v_i - v_j$:
        $d_{\min}(b_i, b_j) \leq u_{ij}$ and
        $d_{\max}(b_i, b_j) \geq l_{ij}$, where
        $d_{\min}(\cdot)$ and $d_{\max}(\cdot)$ are the minimum and
        maximum distances using equation~\ref{3d-distance} between two MBRPs respectively.
  \item \textbf{Case} $v_i \rightarrow v_j$:
        There is no node
        $n'_j \in IO_j\ (\neq n_j)$ such that
        $d_{\max}(b_i, b'_j) < l_{ij}$.
\end{itemize}
\end{definition}

\begin{definition}[$e$-match]
A pair of objects $o_s, o_t \in D$ is an \emph{$e$-match} for the
edge $e = (v_i, v_j) \in E$, if $o_s$ and $o_t$ have keywords $w_i$ and
$w_j$ respectively, and they satisfy the spatial constraints of $e$.
\end{definition}

\begin{lemma}
Let $D$ be a set of spatial objects and let $P = P(V, E)$ be a spatial pattern. For each keyword $w_i$, let $D_i$ be the set of all objects with keyword $w_i$. Let $|D_w| = \max\{|D_i|: \text{$w_i$ is a keyword}\}$. Then $k$-dimensional SPM takes $O\left(k 4^k |D_w|^2 + k|D_w|^n\right)$ time, where $n = |V|$ and $m = |E|$.

\end{lemma}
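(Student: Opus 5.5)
We bound the three phases of Algorithm~\ref{alg:3Despm} separately: the level-by-level computation of $n$-matches, the computation of $e$-matches, and the final join. The first two phases together give the $k4^k|D_w|^2$ term and the join gives the $k|D_w|^n$ term. Two elementary facts are used throughout. First, evaluating equation~\ref{3d-distance} between two points costs $O(k)$. Second, $d_{\min}$ and $d_{\max}$ between two MBRPs also cost $O(k)$, because they decompose coordinatewise into per-axis gaps and per-axis spans. We treat $|P|$ (hence $m\le n^2$ and the number of edges) and the pruning bookkeeping as constants absorbed into the $O(\cdot)$, as is done in the ESPM analysis~\cite{Chen2019}.

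\textbf{Step 1 ($n$-matches).} Each node of a hyperoctree has $2^k$ children. When we descend one level for an edge $(v_i,v_j)$, an $n$-match $(n_i,n_j)$ at level $l$ spawns at most $2^k\cdot 2^k=4^k$ candidate child pairs, and each candidate is checked with one $O(k)$ MBRP test. To count the surviving pairs at any level, note that we only store nonempty nodes, and each $IHO_i$ has at most $|D_i|\le|D_w|$ nonempty nodes per level. Hence there are at most $|D_w|^2$ parent pairs at any level, and the work per level is $O(k4^k|D_w|^2)$.

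The main obstacle is the number of levels $L$. A naive count makes the work grow with $L$, and we must avoid an extra $L$ factor. The argument I would try first is to charge each candidate child pair to a distinct pair of leaf-objects beneath it. The leaves bound the total number of nodes in $IHO_i$ by $O(|D_i|)$ once chains of single-child nodes are compressed, which is the standard compressed-octree argument~\cite{Samet2006}. The total number of node pairs examined across all levels is then $O(|D_w|^2)$. If this charging fails, we fall back to assuming that $L$ is a constant (fixed depth), which the statement implicitly takes since $L$ does not appear. Checking the exclusion signs ($\rightarrow$, $\leftarrow$, $\leftrightarrow$) requires, for each node, a scan over the other tree's nodes at that level. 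That scan is again $O(k|D_w|)$ per node, hence $O(k|D_w|^2)$ overall, and it stays within the bound.

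\textbf{Step 2 ($e$-matches).} For each non-skip edge we test pairs of objects $(o_s,o_t)\in D_i\times D_j$ lying inside surviving leaf $n$-matches. There are at most $|D_w|^2$ such pairs, each costing $O(k)$ by equation~\ref{3d-distance}. Exclusion constraints are verified using the precomputed node information from Step 1. This phase therefore costs $O(k|D_w|^2)$, which is dominated by the Step 1 bound.

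\textbf{Step 3 (join).} A full match assigns one object from $D_i$ to each of the $n$ pattern vertices, so the number of partial and complete matches produced along the join order $\Gamma$ is at most $\sum_{t\le n}|D_w|^t=O(|D_w|^n)$. Extending a partial match by a new vertex requires checking the edges incident to that vertex. Each check is a constant number (depending only on $P$) of $O(k)$ distance evaluations, or lookups of $e$-matches. The join therefore costs $O(k|D_w|^n)$. Summing the three phases gives $O(k4^k|D_w|^2+k|D_w|^n)$.
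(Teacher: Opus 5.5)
Your proposal is correct and follows the same three-phase decomposition as the paper: $4^k$ child pairs per surviving node pair with $O(k)$ MBRP tests for $n$-matches, $O(k|D_w|^2)$ for $e$-matches, and $O(k|D_w|^n)$ for the join, with $m$ and $n$ absorbed as constants; the paper imports these per-phase estimates from ESPM~\cite{Chen2019}, whereas you derive them directly. One caveat: the octree used here is a bucket octree with depth capped at 12, not a compressed one, so what actually closes Step~1 is your fallback of treating the depth $L$ as a constant (which the paper's bound also assumes implicitly), not the compressed-octree charging.
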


\begin{proof}
In $k$-dimensional space, the inverted quadtree of ESPM~\cite{Chen2019} is generalized to an inverted hyperoctree, where each node has $2^k$ children. Thus, each surviving node pair generates at most $2^k \cdot 2^k = 4^k$ child-node pairs. This means the $n$-match computation takes time $O\left(k m 4^k \big(\frac{|D_w|}{s}\big)^2\right)$, the $e$-match computation takes time $O(km |D_w|^2)$, and the joining of $e$-matches takes time $O(k|D_w|^n)$, where the estimates are obtained from \citet{Chen2019}, and the linear factor of $k$ is due to the linear time to compute distance or check geometric constraints in $k$-dimensional space. As in \citet{Chen2019}, we typically have $n, m \ll |D_w|$ and $n \geq 2$. Thus, the total computation time is $O\left(k 4^k |D_w|^2 + k|D_w|^n\right)$.
\end{proof}
\section{Experiments}
\subsection{Datasets}
We perform experiments on both synthetic and real-world 3D data.

\paragraph{Synthetic Dataset}
We devise 4 synthetic datasets consisting of 10,000, 100,000, 1 Million, and 10 Million objects, respectively, with each object being a point in a cubic domain with one to three keyword class labels. 
For each dataset we vary sparsity by adjusting the domain size and the keyword frequency, giving 12 configurations in total.
We devise 20 unique query pattern structures and aggregate results over them, measuring total time and peak RAM.

\paragraph{Hamburg Dataset}
To evaluate our algorithm in a real urban environment, the underlying city model must provide a semantically rich representation of buildings, streets, and urban furniture. 
Representations derived solely from the extrusion of 2D footprints, such as those in OpenStreetMap, are inadequate because they lack detailed semantic and geometric information, like facade structures, roof characteristics, openings, materials, and other object-level attributes that distinguish between otherwise similar geometries. 
Our Hamburg dataset is based on the Level of Detail 2 (LOD2) semantic 3D city model of Hamburg, Germany~\footnote{\url{https://geoportal-hamburg.de/?map=3D}}, provided in the CityGML format~\cite{groeger_citygml2_2012}. 
The original CityGML data were translated into a property knowledge graph and subsequently loaded into a Neo4j graph database, enabling knowledge extraction and graph manipulation using the 3DCityKG framework~\footnote{\url{https://github.com/tum-gis/3dcitykg}}.
The city model was provided by the State Office for Geoinformation and Surveying of Hamburg. 
It includes information on roof type, roof height, building height, building use, the number of storeys, and the 3D location of all modeled building parts.
We enrich the knowledge graph with additional building attributes, including roof material information following the methodology proposed by \citet{lukas_roofs}\footnote{\url{https://zenodo.org/records/18896614}}. 
3D data on trees was also provided by the Geoinformation and Surveying of Hamburg~\footnote{\url{https://metaver.de/trefferanzeige?docuuid=24513F73-D928-450C-A334-E30037945729&q=Baeume\%20Hamburg}}.
We also add approximately 20 individual street-level facade openings, including windows and doors.
We devise 5 unique query pattern structures and aggregate results over them, measuring total time and peak RAM.


\subsection{Algorithm and Hardware Settings}
For the Octree, we set the minimum depth to 1, the maximum depth to 12, and the splitting threshold to 64.
The algorithm is implemented in Python, without using libraries.
We compute the first 1000 matches for each query.
All experiments are run using 32 CPUs and 249GB RAM.

\subsection{Results}

We present results on both the real and synthetic datasets in Table~\ref{tab:espm3d_scalability}.
Scalability tests on synthetic data up to 10 million objects show good performance for sparse and medium sparsity datasets (total ~20 to an hour minutes to evaluate all queries for each setting).
On dense datasets, we run into memory issues. This is because the dense settings leads to less selective pruning, resulting in far more node pairs to analyze for $e$-matches. 
On the real Hamburg data, we complete all queries in about 4 minutes with reasonable memory utilization.
Since we propose the first 3D spatial pattern matching method, our results focus on scalability testing rather than comparison to any known baseline.
We find that the algorithms we adapt are efficient for sparse and medium density graphs, but have memory limitations on dense graphs.
We suggest memory efficient 3D spatial pattern matching as a useful line of future research.

\section{Conclusion}

This paper extends the spatial pattern matching problem to $k$-dimensions and proposes a subgraph matching algorithm for arbitrary dimensions capable of resolving spatial patterns over distance relations.
We demonstrate the scalability of our algorithm on two datasets that we construct and open source, including both synthetic data and real building height data from the city of Hamburg, Germany.

\begingroup

\begin{table}[htbp]
\centering
\setlength{\tabcolsep}{2pt}
\renewcommand{\arraystretch}{0.8}
\scriptsize
\begin{tabular}{lrrrrrr}
\toprule
\multicolumn{7}{c}{\textbf{Synthetic}} \\
\hline
& \multicolumn{2}{c}{\textbf{Sparse}} & \multicolumn{2}{c}{\textbf{Medium}} & \multicolumn{2}{c}{\textbf{Dense}} \\
\cmidrule(lr){2-3}\cmidrule(lr){4-5}\cmidrule(lr){6-7}
\textbf{$|D|$} & \textbf{Time (s)} & \textbf{Peak RAM (GB)} & \textbf{Time (s)} & \textbf{Peak RAM (GB)} & \textbf{Time (s)} & \textbf{Peak RAM (GB)} \\
\midrule
10K & 0.72 & 0.0 & 0.39 & 0.0 & 0.31 & 0.0  \\
100K & 18.45 & 0.1 & 10.24 & 0.1 & 23.38 & 0.3  \\
1M & 115.96 & 0.6 & 106.42 & 0.6 & OOM & OOM  \\
10M & 1232.15 & 5.7 & 3786.64 & 217.8 & OOM & OOM \\
\midrule
\multicolumn{7}{c}{\textbf{Hamburg}} \\
\hline
\textbf{$|D|$} & \multicolumn{3}{c}{\textbf{Time (s)}} & \multicolumn{3}{c}{\textbf{Peak RAM (GB)}} \\
\midrule
628,477 & \multicolumn{3}{c}{246.83} & \multicolumn{3}{c}{6.6} \\
\bottomrule
\end{tabular}
\caption{Scalability results. Out of memory marked OOM.}
\label{tab:espm3d_scalability}
\end{table}

\endgroup



\bibliographystyle{style/ACM-Reference-Format.bst} 
\bibliography{main.bib} \label{bibliography}

@inproceedings{Fang2018,
  title={SpaceKey: exploring patterns in spatial databases},
  author={Fang, Yixiang and Cheng, Reynold and Wang, Jikun and Budiman, Lukito and Cong, Gao and Mamoulis, Nikos},
  booktitle={2018 IEEE 34th International Conference on Data Engineering (ICDE)},
  pages={1577--1580},
  year={2018},
  organization={IEEE},
}

@inproceedings{Fang2018b,
  title={On spatial pattern matching},
  author={Fang, Yixiang and Cheng, Reynold and Cong, Gao and Mamoulis, Nikos and Li, Yun},
  booktitle={2018 IEEE 34th International Conference on Data Engineering (ICDE)},
  pages={293--304},
  year={2018},
  organization={IEEE}
}

@book{groeger_citygml2_2012,
	author = {Gröger, Gerhard and Kolbe, Thomas H. and Nagel, Claus and Häfele, Karl-Heinz},
	title = {{OGC City Geography Markup Language (CityGML) Encoding Standard}},
	year = {2012},
	month = {4},
	publisher = {Open Geospatial Consortium (OGC)},
	numpages = {344},
	note = {OGC 12-019, Version 2.0.0, International Standard}
}

@InProceedings{Tang_2025_CVPR,
    author    = {Tang, Wenzhao and Li, Weihang and Liang, Xiucheng and Wysocki, Olaf and Biljecki, Filip and Holst, Christoph and Jutzi, Boris},
    title     = {Texture2LoD3: Enabling LoD3 Building Reconstruction With Panoramic Images},
    booktitle = {Proceedings of the IEEE/CVF Conference on Computer Vision and Pattern Recognition (CVPR) Workshops},
    month     = {June},
    year      = {2025},
    pages     = {2041-2051}
}

@OTHER{
	hcu_1248,
	author = {Lukas Arzoumanidis and Al Maimun As Samee and Elmehdi Kanna and Son Nguyen and Youness Dehbi},
	title = {Domain-Adaptive Object Detection for Enriching Semantic 3D City Models with Building Storeys from Street-View Images},
	year = {2026},
        series = {ISPRS Annals of the Photogrammetry, Remote Sensing and Spatial Information Sciences},
	booktitle = {ISPRS Congress 2026 Toronto, Canada},
	url = {https://repos.hcu-hamburg.de/handle/hcu/1248},
}

@article{Fang2019,
  title={Evaluating pattern matching queries for spatial databases},
  author={Fang, Yixiang and Li, Yun and Cheng, Reynold and Mamoulis, Nikos and Cong, Gao},
  journal={The VLDB Journal},
  volume={28},
  pages={649--673},
  year={2019},
  publisher={Springer},
}

@phdthesis{nguyen_phdthesis_2024,
	author = {Nguyen, Huynh Duc An Son},
	title = {{Automatic Detection and Interpretation of Changes in Massive Semantic 3D City Models}},
	year = {2024},
	school = {Technical University of Munich},
	url = {https://mediatum.ub.tum.de/1748695},
}

@Article{lukas_roofs,
AUTHOR = {Arzoumanidis, Lukas and Nguyen, Son H. and Johannsen, Lara and Rothaut, Filip and Li, Weilian and Dehbi, Youness},
TITLE = {Object Detection for the Enrichment of Semantic 3D City Models with Roofing Materials},
JOURNAL = {ISPRS Annals of the Photogrammetry, Remote Sensing and Spatial Information Sciences},
VOLUME = {X-4/W6-2025},
YEAR = {2025},
PAGES = {9--16},
DOI = {10.5194/isprs-annals-X-4-W6-2025-9-2025}
}

@article{Chen2019,
  title={ESPM: Efficient spatial pattern matching},
  author={Chen, Hongmei and Fang, Yixiang and Zhang, Ying and Zhang, Wenjie and Wang, Lizhen},
  journal={IEEE Transactions on Knowledge and Data Engineering},
  volume={32},
  number={6},
  pages={1227--1233},
  year={2019},
  publisher={IEEE},
}

@inproceedings{Schneider2024b,
author = {Schneider, Nicole R. and O'Sullivan, Kent and Samet, Hanan},
title = {Graph-based Spatial Pattern Matching: A Theoretical Comparison},
year = {2024},
isbn = {9798400711077},
publisher = {Association for Computing Machinery},
address = {New York, NY, USA},
url = {https://doi.org/10.1145/3678717.3691227},
doi = {10.1145/3678717.3691227},
booktitle = {Proceedings of the 32nd ACM International Conference on Advances in Geographic Information Systems},
pages = {505–508},
numpages = {4},
location = {Atlanta, GA, USA},
series = {SIGSPATIAL '24}
}

@inproceedings{Schneider2024,
  TITLE = {{The Future of Graph-based Spatial Pattern Matching (Vision Paper)}},
  AUTHOR = {Schneider, Nicole R. and O'Sullivan, Kent and Samet, Hanan},
  BOOKTITLE = {{40th IEEE International Conference on Data Engineering, ICDE 2024 -- SEAGraph Workshop}},
  publisher="IEEE",
  organization="IEEE",
  pages={360--364},
  ADDRESS = {Utrecht, Netherlands},
  YEAR = {2024},
  MONTH = May,
}

@inproceedings{Osul2023b,
  author  = {O'Sullivan, Kent and Schneider, Nicole R. and Samet, Hanan},
title = {COMPASS: Cardinal Orientation Manipulation and
Pattern-Aware Spatial Search},
year = {2023},
publisher = {Association for Computing Machinery},
booktitle = {Proceedings of the 2nd ACM SIGSPATIAL International Workshop on Searching and Mining Large Collections of Geospatial Data},
location = {Hamburg, Germany},
series = {GeoSearch'23}
}

@inproceedings{Osul2023,
  author  = {O'Sullivan, Kent and Schneider, Nicole R. and Rasheed, Aleeza and Samet, Hanan},
title = {GESTALT: Geospatially Enhanced Search with Terrain
Augmented Location Targeting},
year = {2023},
publisher = {Association for Computing Machinery},
booktitle = {Proceedings of the 2nd ACM SIGSPATIAL International Workshop on Searching and Mining Large Collections of Geospatial Data},
location = {Hamburg, Germany},
series = {GeoSearch'23}
}

@book{Samet2006,
  title={Foundations of multidimensional and metric data structures},
  author={Samet, Hanan},
  year={2006},
  publisher={Morgan Kaufmann}
}

@inproceedings{Yianilos1993,
author = {Yianilos, Peter N.},
title = {Data structures and algorithms for nearest neighbor search in general metric spaces},
year = {1993},
publisher = {Society for Industrial and Applied Mathematics},
address = {USA},
booktitle = {Proceedings of the Fourth Annual ACM-SIAM Symposium on Discrete Algorithms},
pages = {311–321},
numpages = {11},
location = {Austin, Texas, USA},
series = {SODA '93}
}

@inproceedings{Ciaccia1997,
  title={M-tree: An E cient access method for similarity search in metric spaces},
  author={Ciaccia, Paolo and Patella, Marco and Zezula, Pavel},
  booktitle={Proceedings of the 23rd VLDB conference, Athens, Greece},
  pages={426--435},
  year={1997}
}

@article{Meagher1982,
  title={Geometric modeling using octree encoding},
  author={Meagher, Donald},
  journal={Computer graphics and image processing},
  volume={19},
  number={2},
  pages={129--147},
  year={1982},
  publisher={Elsevier}
}

@article{Chen2022,
  title={Example-based spatial pattern matching},
  author={Chen, Yue and Feng, Kaiyu and Cong, Gao and Kiah, Han Mao},
  journal={Proceedings of the VLDB Endowment},
  volume={15},
  number={11},
  pages={2572--2584},
  year={2022},
  publisher={VLDB Endowment},
}

@inproceedings{Luo2017,
  title={Seq: Example-based query for spatial objects},
  author={Luo, Siqiang and Hu, Jiafeng and Cheng, Reynold and Yan, Jing and Kao, Ben},
  booktitle={Proceedings of the 2017 ACM on Conference on Information and Knowledge Management},
  pages={2179--2182},
  year={2017}
}

@INPROCEEDINGS{Zhang2022,
  author={Zhang, Hanyuan and Luo, Siqiang and Shi, Jieming and Nathan Yan, Jing and Sun, Weiwei},
  booktitle={2022 IEEE 38th International Conference on Data Engineering (ICDE)}, 
  title={Example-based Spatial Search at Scale}, 
  year={2022},
  volume={},
  number={},
  pages={539-551},
  doi={10.1109/ICDE53745.2022.00045}
}

\end{document}